\def\clap#1{\hbox to 0pt{\hss#1\hss}}
\newcommand{\Z}{\mathbb{Z}}
\newcommand{\R}{\mathbb{R}}
\newtheorem{thm}{Theorem}
\newtheorem{theorem}[thm]{Theorem}
\renewenvironment{proof}{{\noindent\bf Proof. }}{ \hfill ~\qed}
\def\qed{\rule[0pt]{5pt}{5pt}\par\medskip}
\newcommand{\floor}[1]{\lfloor#1\rfloor}\let\mathopfont=\mathrm
\newcommand{\dom}{\mathop{\mathopfont{dom}}}
\newcommand{\ie}{{\it i.e.}}
\newcommand{\neighbors}{\mathop{\mathopfont{neighbors}}}
\newcommand{\append}{\mathop{\mathopfont{append}}}
\def\ugn{\lambda}
\def\epoch{t^\text{e}}
\def\wu{\omega^\text{u}}
\def\interval#1#2{(#1,#2)}
\def\clap#1{\hbox to 0pt{\hss #1\hss}}
\def\itoj{{ij}}
\def\jtoi{{ji}}
\newcommand{\bittide}{{bittide}}
\begin{document}

\title{Modeling and Control of \bittide\ Synchronization}

\author{Sanjay Lall\footnotesymbol{1}
  \and  C\u{a}lin Ca\c{s}caval\footnotesymbol{2}
  \and  Martin Izzard\footnotesymbol{2}
  \and  Tammo Spalink\footnotesymbol{2}}

\note{Preprint}

\maketitle

\makefootnote{1}{S. Lall is Professor of Electrical
  Engineering at Stanford University, Stanford, CA 94305, USA, and is
  a Visiting Researcher at Google.
  \texttt{lall@stanford.edu}\medskip}

\makefootnote{2}{C\u{a}lin Ca\c{s}caval, Martin Izzard, and Tammo Spalink are
  with Google.}

\begin{abstract}
Distributed system applications rely on a fine-grain common sense of
time.  Existing systems maintain this by keeping each independent
machine as close as possible to wall-clock time through a combination
of software protocols and precision hardware references.  
This approach is expensive,
requiring protocols to deal with asynchrony and its performance
consequences.  Moreover, at data-center scale it is impractical to
distribute a physical clock as is done on a chip or printed circuit
board.  In this paper we introduce a distributed system design that
removes the need for physical clock distribution or mechanisms for
maintaining close alignment to wall-clock time, and instead provides
applications with a perfectly synchronized {\em logical clock}.  We
discuss the {\em abstract frame model} (AFM), a mathematical model
that underpins the system synchronization. The model is based on the
rate of communication between nodes in a topology without requiring a
global clock.  We show that there are families of controllers that
satisfy the properties required for existence and uniqueness of
solutions to the AFM, and give examples.
\end{abstract}

\section{Introduction}

The \bittide\ system is designed to enable synchronous execution
at large scale without the need for a global clock.  Synchronous
communication and processing offers significant benefits for
determinism, performance, utilization, and
robustness.  Synchronization in \bittide\ is decentralized -- every
node in the system adjusts its frequency based on the observed
communication with its neighbors.  This mechanism, first
proposed in~\cite{spalink_2006}, defines a synchronous logical clock
that is resilient to variations in physical clock frequencies.

The design objective is for \bittide\ systems to possess shared
logical time. It is \emph{not} a requirement for this logical time to
perfectly match physical time.  All machines on the network share a
logical discrete clock that ticks in lockstep. This idea is called
\emph{logical synchronization}, to distinguish it from \emph{physical
synchronization}. Viewed from the inside, the behavior of a logically
synchronized system is identical to that of a system with a single
shared physical clock. Viewed from the outside, the logical time is
fully disconnected from physical wall-clock time, meaning that logical
time steps can vary in physical duration, both over time and between nodes.
Applications running on the system use the logical time to
coordinate their actions, which replaces the need to reference
physical time.  Thus, \bittide\ enables perfect \emph{logical}
synchronization, using imperfect \emph{physical} synchronization.

The decentralized nature of the \bittide\ synchronization mechanism
enables building large-scale systems that are synchronized with an
accuracy that is otherwise hard or prohibitively expensive to achieve.
Simply overlaying synchronization information onto asynchronous
communication layers is possible, but in practice has led to large
communication requirements and limited
accuracy~\cite{ntp,corbett_spanner_2013,li_sundial_2020}.  The
\bittide\ system instead achieves synchronization using
\emph{low-level} data flows inherent to serial data links. There is no
communication overhead (in-band signaling) required by the
synchronization mechanism, as the continuous data (meaningful or not)
exchanged at the physical layer is sufficient to provide the necessary
input to our control system.  As we will demonstrate in this paper,
the logical synchronization is accurate, even though the underlying
substrate is only approximately synchronized. This enables building a
much wider class of synchronized systems.

\paragraph{Prior work.}

This particular scheme for synchronization using low-level network
mechanisms originates in~\cite{spalink_2006}. However, other
synchronous network protocols exist, including the heavily-used
SONET~\cite{sonet}. High-level protocols for clock-synchronization
such as NTP are also widely used~\cite{ntp}. Large systems such as
Spanner use proprietary hardware and protocols to keep clocks
as close to each other as
possible~\cite{corbett_spanner_2013,li_sundial_2020}.

The literature discussing synchronization dynamics is large, and we
can only touch upon it briefly here.  The behavior of networked
systems that achieve synchronization via feedback control mechanisms
has been widely studied, and the clock frequency behavior here is
similar to that of several other systems which have been analyzed in
the literature. These include flocking models~\cite{boyds}, Markov
chain averaging models~\cite{hastings,boyd2004}, congestion control
protocols~\cite{kelly}, power networks~\cite{strogatz}, vehicle
platooning~\cite{hedrick}, and flocking~\cite{jadbabaie}. The earliest
work to study such coupled oscillator models of synchronization is
Winfree~\cite{winfree1967}.
Several aspects of the \bittide\ control system are still open
challenges, and we discuss them in Section~\ref{sec:controlobjectives}.

\section{The \bittide\ synchronization mechanism}

We now describe in more detail the structure of
a \bittide\ system. We give here for the sake of clarity the simplest
implementation and omit several possible variations and extensions.

We have a network of computers, represented as an undirected graph,
each node being a computer with a single processor.  Each edge
connects a pair of computers, and corresponds to a pair of
links, one in each direction.
These connections are direct; in this simple case there are no
switches or routers between neighboring nodes.  Bits are sent across
these links grouped into frames. In the simplest case, all frames are
the same (fixed) size.  Implementations may choose the frame size,
which is unrestricted by the \bittide\ system definition.  Because the
frames are of fixed size, determining the boundaries between frames is
straightforward and has low overhead.

Consider two neighboring nodes. At each node there is a queue (called
the \emph{elastic buffer}). Frames are added to the tail of the
elastic buffer as they arrive.  At the head of the elastic buffer,
frames are removed from the buffer and read by the processor. Whenever
a frame is removed from the buffer, a new frame is sent on the
outgoing link back to the sender. Thus each edge between two nodes on
the graph corresponds to \emph{four} objects; a link in each
direction, and an elastic buffer at each node. Because removing a
frame from the head of the queue is always consequent with sending a
new frame on the outgoing link, if we are interested solely in the
network dynamics (and not the actual data in the frames) we can
conceptually view these frames as identical; it would be the same if
each node simply sent back the frames it received, after they
propagated through the elastic buffer. In this sense, the two links
and two buffers form a closed cycle, with frames flowing around
perpetually, as illustrated in Figure~\ref{fig:twonodes}.

\begin{figure}[ht]
  \centerline{\begin{overpic}[width=0.21\textwidth]{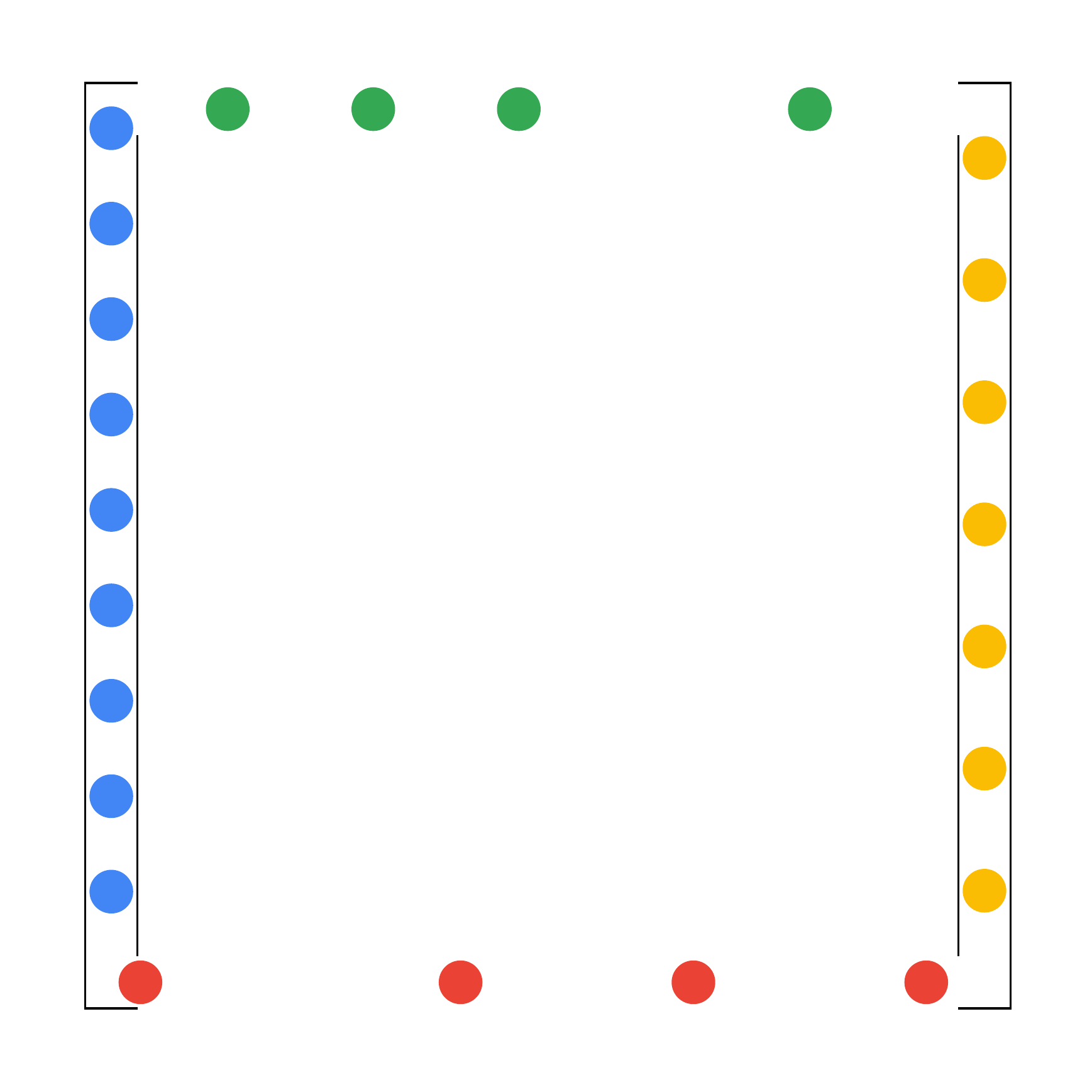}
    \put(50,15){\clap{$\to$ link}}
    \put(12,-4){\clap{$i$}}
    \put(90,-4){\clap{$j$}}
    \put(50,80){\clap{link $\leftarrow$}}
    \put(100,50){\parbox[t]{1in}{elastic\\ buffer}}
    \put(0,50){\llap{\parbox[t]{0.4in}{elastic\\ buffer}}}
    \end{overpic}}
  \caption{Two adjacent nodes.}
  \label{fig:twonodes}
\end{figure}

There is one more wrinkle to this behavior. If a node $i$ has, say
$d_i$ neighbors, then it has $d_i$ incoming links (with $d_i$ elastic
buffers), and $d_i$ outgoing links.  The critical feature here is the
timing; frames are sent simultaneously on all outgoing links. It is
worth pointing out that exact simultaneity is not necessary. Instead
we simply need the different transmissions to remain in discrete
lockstep with each other, so that transmissions occur in stages of
$d_i$ frames, one on each link. The next stage does not start until
the previous stage has finished.

We can view each edge as a ring with two buffers and two links, and
frames moving around the ring. Because of the above timing behavior,
each ring turns in lockstep with all other rings. This holds even
though the links may have different physical latency and the buffer
occupancies may vary. The rings behave somewhat analogously to
mechanical gears; even if gears have different sizes and hence
different angular rates, at any point where two gears meet, the teeth
of both gears pass the point at the same rate. 

At each node there is an oscillator, which drives the processor
clock, which on a modern system might run at a few
GHz.  On a \bittide\ system, the oscillator also drives the network
(possibly via frequency multiplication or division), so that
with each clock tick a set of $d_i$ frames is sent, one to each
neighbor. Each network clock tick corresponds to a stage above.

Because the same oscillator is used for the transmission of
frames as for the processor, the lockstep behavior of the network
induces an identical lockstep behavior for all processors. All nodes
coordinate to synchronize these clocks, as will be discussed below. At
a node, with each tick of the clock, a frame is read from each elastic
buffer, a frame is sent on each outgoing link, and a processor
instruction is executed.

The \bittide\ system operates at Layer 1 (physical level) in the OSI
network layer model. This is a significant advantage for
synchronization, as it ensures that there is minimal buffering. This
is in contrast with higher-level protocols such as NTP~\cite{ntp},
where synchronization is at a comparatively coarser level.

The frames that are being transmitted around the network have a dual
purpose; they are used for communication, and they are used for
synchronization.  For the latter, the content of the frames is
irrelevant to the synchronization, and a processor must simply send a
frame whenever it reads one. If there is no data to be communicated,
it must therefore send a frame containing garbage data. (This is not
generally wasteful, because most standard wired network links operate
this way transparently to the user by default.)  Additionally, since
each link is between directly-connected neighbors, it is the
responsibility of higher-level network protocols to perform multi-hop
communication and routing.

The \bittide\ system enables an entire datacenter, and possibly even
larger networks, to operate in logical synchrony. Such behavior is a
significant departure from how current datacenters work, where large
applications run via networks of asynchronous processes.  For many
applications, one of the disadvantages of implementing systems in this
way is \emph{tail-latency}~\cite{dean2013tail}, where delays caused by
a few occasionally-slow processes compound to limit overall
performance.  The \bittide\ system offers a way to eliminate sources of
latency variability via deterministic ahead-of-time scheduling of
processes.

\section{Using feedback control to maintain synchrony}

Consider two neighboring nodes $i$ and $j$. Each node transmits a
frame on an outgoing link whenever it removes a frame from the
corresponding elastic buffer, thus conserving the total number of
frames in the cycle of two links and two buffers.  If the oscillator
of $j$ is faster than that of $i$, then $j$ will pull frames from its
buffer more frequently than they are being supplied, and consequently
its elastic buffer will become emptier. Conversely, $i$'s receive
buffer will become fuller.  The number of frames in each buffer is
called \emph{buffer occupancy}.  Each node has a controller process that
measures buffer occupancy, which provides a feedback signal indirectly
capturing the relative speed of its oscillator compared to the
other. It will then adjust its oscillator speed appropriately.

A node which has $d_i$ neighbors has $d_i$ elastic buffers, which are
all emptied according to the local clock frequency, and filled
according to the frequency of the corresponding neighbor.  The
collection of occupancies of these buffers is therefore a feedback
signal, providing information regarding the relative frequencies of
all of the neighboring nodes.  The question that remains is how best
to adjust the local oscillator frequency in response to these
signals. This is the responsibility of the control algorithm.  
We also need to address the challenge that the
information that each node has about its neighbor is delayed by the
corresponding latency of the physical link; that is, the amount of
time between a frame being sent from the head of the sender's elastic
buffer and it being received into the tail of the recipient's elastic
buffer.  When a node changes its frequency, the effects are felt by
all of its neighbors delayed by the corresponding latencies, and this
in turn leads to the neighbors adjusting their frequencies, and so
on. The effects of frequency adjustment therefore propagate through
the network. As a result of these dynamics, both the latencies and the
graph topology can have a significant effect on the overall behavior
of our decentralized synchronization mechanism.

\section{Modeling the dynamics of frames}

Let the set of all nodes in the graph be
$\mathcal{V} = \{1,2,\dots,n\}$. Each node has a clock, whose value at
any time $t$ is a real number $\theta_i(t)$, called the \emph{phase}
of the clock.  It is driven by an oscillator whose frequency is
$\omega_i(t)$, and so satisfies the dynamics
\[
\dot\theta_i(t) = \omega_i(t) \quad \text{for all } i \in \mathcal V
\]
The value of $\theta_i(t)$ is called the \emph{local time} at node
$i$, or the time in \emph{local ticks}.  The frequency $\omega_i(t)$
may vary over time, both as a result of physical variations such as
temperature, and as a result of adjustments by the controller at node
$i$.  This clock drives the frame reading and transmission processes
in the following way.  Every time~$t$ at which the clock $\theta_i(t)$
is an integer, node $i$ removes a frame from the head of the
corresponding elastic buffer and sends a frame on the link from $i$ to
$j$. Therefore, if~$\omega_i$ does not change over time, node $i$
sends $\omega_i$ frames per second.

This simple model is enough to determine the location of all of the
frames within the system, as follows. Suppose $s < t$, then the number
of frames that have been sent on the time interval $(s,t]$ by node $i$
is
\[
\sigma_i\interval{s}{t} = \floor{\theta_i(t)} - \floor{\theta_i(s)}
\]
Between any two nodes $i$ and $j$, there are two links, one in each
direction.  The link from $i$ to $j$ has latency $l_\itoj$.  Note that
the latency includes the time to serialize a frame, transmit it across
the physical link, and deserialize the frame into the elastic buffer.
It does not include the time for the frame to propagate from the tail
to the head of the elastic buffer, and so is not a measure of
communication delay between nodes.

On this link, the number of frames at time~$t$ is therefore
\begin{align*}
  \gamma_\itoj(t) &= \sigma_i\interval{t-l_\itoj}{t} \\
  &= \floor{\theta_i(t)} - \floor{\theta_i(t-l_\itoj)}
\end{align*}
Note that this implies a specific interpretation of boundary points,
so that frames which at time $t$ are exactly at the start of the link
are considered as on the link, and frames that are at the end of the
link are considered as no longer on the link.

Define the number of frames received into the elastic buffer at $j$
from $i$ over the interval $(s,t]$ to be $\rho_\itoj\interval{s}{t}$.
  Since the links do not drop frames, the number received is simply
  the number sent, delayed by the latency. We have
\[
\rho_\itoj\interval{s}{t} = \sigma_i\interval{s-l_\itoj}{t-l_\itoj}
\]
To determine the occupancy of the elastic buffer, we need to specify
additional information, because while knowledge of $\theta$ informs us
of how many frames arrive and leave the buffer within a particular
time interval, the total number of frames in the buffer also depends
on how many it contained beforehand. So we define $\beta_\itoj(t)$ to
be the occupancy of the elastic buffer at node $j$ associated with the
link from node $i$ at time $t$, and $\beta^0_\itoj$ to be the
occupancy at time $t=0$. The following gives a mathematical definition
of $\beta_\itoj$.  We construct $\beta_\itoj$ as the unique function
for which
\[
\beta_\itoj(0) = \beta_\itoj^0
\]
and for which the following difference relationship holds for all $t,s
\geq 0$.  The difference between the occupancy at time $t$ and the
occupancy at time $s$ is simply the number of frames received minus
the number sent over that interval.
That is,
\begin{align}
  \beta_\itoj(t) - \beta_\itoj(s)  \hskip -40pt &  \nonumber \\
  &= \underbrace{\rho_\itoj\interval{s}{t}}_\text{received} -
  \underbrace{\sigma_j\interval{s}{t}\vphantom{\rho_i}}_\text{sent\vphantom{d}} \nonumber \\
  &=
  \label{eqn:betadiff}
  \sigma_i\interval{s-l_\itoj}{t-l_\itoj} -\sigma_j\interval{s}{t} \\
  &=
  \floor{\theta_i(t-l_\itoj)} - \floor{\theta_i(s-l_\itoj)}
  - \floor{\theta_j(t)} + \floor{\theta_j(s)} \nonumber
\end{align}
Now define
\begin{equation}
  \label{eqn:ugn}
  \ugn_\itoj(t) =  \beta_\itoj(t)  -\floor{\theta_i(t-l_\itoj)}
  +  \floor{\theta_j(t)}
\end{equation}
and notice that equation~\eqref{eqn:betadiff} implies that
\[
\ugn_\itoj(t) = \ugn_\itoj(s) \qquad \text{for all }s,t
\]
Hence $\ugn_\itoj$ is constant.  Evaluating equation~\eqref{eqn:ugn} at
$t=0$ shows that it may be determined from $\beta_\itoj^0$ and the
initial conditions for $\theta$, which specify the value of
$\theta(t)$ for all $t \leq 0$. Note that the initial conditions are
not simply given by $\theta(0)$ since the system contains delays, and
is therefore infinite-dimensional. Then we have
\[
\beta_\itoj(t)
 = \floor{\theta_i(t-l_\itoj)} - \floor{\theta_j(t)} + \ugn_\itoj
 \]
Note in particular that $\ugn_\itoj$ and $\ugn_\jtoi$ may differ.

Using the above definition of $\gamma_\itoj$, it is convenient to
write $\ugn_\itoj$ as
\[
\ugn_\itoj = \beta_\itoj(t) + \gamma_\itoj(t) + \floor{\theta_j(t)} - \floor{\theta_i(t)}
\]
This number is the buffer occupancy plus the link occupancy, plus the
clock offset between the nodes. We can interpret its invariance as
follows.  It is constant because the first two terms sum to the number
of frames on the path from the head of the buffer at $i$ to the head
of the buffer at $j$.  The only way this can change is via one or
other clock increasing by one, and those two actions correspond to a
frame being added or removed from this path. Furthermore, we have
\[
\ugn_\jtoi + \ugn_\itoj = \beta_\jtoi(t) + \gamma_\jtoi(t)
+ \beta_\itoj(t) + \gamma_\itoj(t)
\]
which means that the total number of frames on the two links plus two
buffers is conserved.

In the above description we have for simplicity only discussed the
case where all links transmit at the same rate. However, it is
straightforward to extend this model to include links which send
at different rates. To do this, one adds \emph{gearboxes} $g_\itoj$
so that node $i$ sends $g_\itoj$ frames onto link $i\to j$ for every tick
of $\theta_i$. Then the buffer occupancies become
\[
\beta_\itoj(t) =  \floor{g_\itoj\theta_i(t-l_\itoj)}
- \floor{g_\itoj \theta_j(t)}  + \ugn_\itoj
\]
Since this additional complexity does not affect the control mechanism
responsible for synchronization we do not discuss it further here.

In a hardware implementation, each node has memory dedicated to the
elastic buffer for each network interface.  Two pointers are stored
which keep track of each end of the buffer, so that adding or removing
a frame does not require data to be moved in memory. However, the
buffer has a fixed size, and so can overflow. Both overflow and
underflow at any node are fatal errors for the \bittide\ system.

The requirement that the elastic buffers neither overflow nor
underflow means that the difference in clock frequencies at the two
ends of a link cannot stay too large for too long; if it does, either
the buffer at the low-frequency end will overflow or the buffer at the
high-frequency end will underflow, or both.

Define the elastic buffer length to be $\beta^\text{max}$. Then we
must ensure that the frequencies $\omega$ are such that the
occupancies $\beta_\itoj$ satisfy
\[
0 \leq \beta_\itoj(t) \leq \beta^\text{max}
\text{ for all } t\geq 0 \text{ and } i,j \in \mathcal V
\]
This is the fundamental performance requirement that the control
system must enforce. Additionally, it is preferable that
$\beta_\itoj(t)$ be small, since smaller buffers mean smaller
communication latency. Here, by communication latency we mean the
amount of real time (wall-clock time) that it takes for a frame to
leave the head of the source elastic buffer and arrive at the head of
the destination elastic buffer.

Achieving this requirement means we must have clocks that are
operating (on average) at the same frequency at all nodes. In
practice, left alone, no two clocks will remain perfectly
synchronized, and over time their counters will diverge.  Some of the
most stable clocks are atomic clocks, which offer a relative error of
about 1 part in $10^{11}$.  This much error means that the buffer will
accumulate about 1 bit every 100 seconds on a gigabit link.  To avoid
buffer overflow and underflow, we therefore need to use feedback
control to stabilize the buffer occupancies.

\section{Connecting a controller}

The basic idea of the control system is that it can measure the buffer
occupancies $\beta$ and set the frequencies $\omega$. However, there
are several complicating factors. The most fundamental is that the
controller cannot actually set the exact frequency $\omega$. The
oscillator at node $i$ has a frequency at which it will operate 
when it is not corrected, which is called the \emph{uncorrected frequency},
denoted by $\wu_i$.  When the oscillator is controlled, the
frequency $\omega_i$ is
\[
\omega_i(t) = c_i(t) + \wu_i(t)
\]
where $c_i$ is the \emph{correction} set by the controller.  The
uncorrected frequency is typically not known exactly, and is subject
to both manufacturing tolerances and the effects of aging,
temperature, and other physical effects.
It changes with time, and is not
measurable while the system is running. Models for the change in $\wu$
over time include phenomena such as drift and
jitter~\cite{allan1987time}.

At each node, by design, there is no way to measure wall-clock time $t$; the best
the processor can do is observe the local clock $\theta_i$. It is this
notion of time that determines when the occupancy of the elastic
buffers is sampled and when the frequency is updated. Therefore, the
sample-rate at each node varies, depending on the state of the
system. This is one of the sources of nonlinear behavior in the
system.

The model of the system is written using wall-clock time $t$ as
independent variable.  However, we do not assume that the controller
can observe $t$. Node $i$ measures the buffer occupancies
at times $t=t_i^0, t_i^1, t_i^2, \dots$. These sample times are
defined by
\[
\theta_i(t_i^k) = \theta_i^0 + kp
\]
Here $p\in\Z_+$ is the sample period, in local ticks. Notice that,
since the initial conditions specify that $\theta_i(0)= \theta_i^0$,
the first sample time is $t_i^0 = 0$. After sampling at time
$t=t^k_i$, node $i$ sets the frequency correction $c_i$ at a
time~$d$ local ticks later.  Specifically, the correction is set at times
$t=s_i^0, s_i^1, s_i^2, \dots$, defined by
\[
\theta_i(s_i^k) = \theta_i^0 + kp + d
\]
It is important that the initial phase $\theta_i^0$ is not an
integer. Otherwise, the buffer occupancy is measured at exactly the
times when $\theta_i$ is integral, and those are precisely the times
at which a frame is removed from the elastic buffer. While this is
mathematically well defined, in practice we cannot measure buffer
occupancy exactly at this time. The interpretation of the fractional
part $\theta_i^0 - \floor{\theta_i^0}$ is that it specifies when the
samples are made, relative to the removal of frames from the buffer.

At each time $t_i^k$, the controller at node $i$ measures $y_k^i$, the set
of buffer occupancies at that node,
that is
\[
y^k_i = \{ (j, \beta_\jtoi(t_i^k)) \mid j \in \neighbors(i)\}
\]
Each buffer occupancy is labeled with the neighboring node $j$ that
supplies it. The controller is a function $\chi_i^k$ which
maps the history of these measurements to the correction $c_i^k\in\R$,
according to
\begin{equation}
  \label{eqn:ioctrl}
  c_i^k = \chi_i^k(y_i^0, \dots, y_i^k)
\end{equation}
This correction is applied on the interval $[s_i^k,
s_i^{k+1})$, and
as a result the frequency is \emph{piecewise constant}, with
\begin{equation}
  \dot\theta_i(t) = c_i^k + \wu_i \quad \text{for } t \in[s_i^k,s_i^{k+1})
\end{equation}

\subsection{The abstract frame model}

The model defined by the above equations is called the \emph{abstract
frame model} for \bittide.  It defines the connection between the
controller and the clock, using an ideal abstraction for the frames,
by which one can determine the location of all of the frames using
only the history of~$\theta$. Since this is a transport model, one
might also model it using an advection partial differential equation,
but here we use the delay-differential equation form.  We summarize
the model here. For all $t \geq0$, $i \in \mathcal V$, and~$k\in\Z_+$,
\begin{equation}
  \label{eqn:afm}
  \begin{aligned}
  \dot\theta_i(t) &= c_i^k + \wu_i \quad \text{for } t \in[s_i^k,s_i^{k+1}) \\
  \beta_\jtoi(t)   &= \floor{\theta_j(t-l_\jtoi)} - \floor{\theta_i(t)} + \ugn_\jtoi\\
  \theta_i(t_i^k) &= \theta_i^0 + kp  \\
  \theta_i(s_i^k) &= \theta_i^0 + kp + d \\
  y^k_i &= \{ (j, \beta_\jtoi(t_i^k)) \mid j \in \neighbors(i)\} \\
  c_i^k &= \chi_i^k(y_i^0, \dots, y_i^k)
  \end{aligned}
\end{equation}
The initial conditions of the model are
\begin{equation}
  \label{eqn:ics}
  \theta_i(t) =
  \begin{cases}
    \theta_i^0 + \omega_i^{(-2)}t & \text{for } t \in [\epoch, 0] \\
    \theta_i^0 + \omega_i^{(-1)}t & \text{for } t \in [0, d/\omega_i^{(-1)}]
  \end{cases}
\end{equation}
The first of these equations ensures that the dynamics are
well-defined, by specifying the initial value of $\theta$ on a
sufficiently large interval of time for the delay dynamics. The second
equation defines $\theta$ on the period between time zero and the time
at which the first controller action takes effect.

A controller is called \emph{admissible} if
\[
\chi_i^k(y_i^0, \dots, y_i^k) + \wu_i >  \omega^{\text{min}}
\]
for all $i,k,y_i^0,\dots,y_i^k$. This ensures that
\begin{equation}
  \label{eqn:freqbound}
  \omega_i(t)  > \omega^{\text{min}}
\end{equation}
The parameters of the model are as follows. The minimum frequency is
$\omega^{\text{min}}>0$. The \emph{epoch} is $\epoch<0$, and it
must satisfy
\[
\epoch \leq -(l_\itoj + d/\omega^\text{min}) \text{ for all } i,j \in \mathcal V
\]
The initial buffer occupancies are $\beta_\itoj^0 \in\Z_+$. The
sampling period is $p \in\Z_+$, the number of clock cycles between
controller updates. The delay $d\in\Z_+$ models the time (in local
ticks) taken to compute the controller update and for the oscillator
to respond to a frequency change.  We assume $d<p$.  The initial
frequencies are
\[
\omega_i^{(-1)} > \omega^\text{min} \text{ and }
\omega_i^{(-2)}> \omega^\text{min}
\]
The initial clock phases are $\theta^0_i \in\R^+ \backslash \Z$. The uncorrected
oscillator frequencies are $\wu_i\in\R_+$.
The constants
$\ugn_\itoj$ are computed according to
\[
\ugn_\itoj =  \beta_\itoj^0  -\floor{\theta_i(-l_\itoj)}  +  \floor{\theta_j^0}
\]
With a state-space decentralized controller, we have
\begin{align*}
  \xi_i^{k+1} &= f_i(\xi_i^k, y_i^k) \quad \\
  c_i^k &= g_i(\xi_i^k, y_i^k)
\end{align*}
where $\xi_i^k$ is state of the controller at node $i$ and step~$k$. This results
in an input-output controller map of the form~\eqref{eqn:ioctrl}.

\section{Existence and uniqueness of solutions}

It is important to establish the existence and uniqueness of solutions
for the abstract frame model. The model's dynamics are a type of
hybrid system, with nonlinearities, state-dependent multi-rate
sampling and delays, making the analysis challenging. In addition, the
buffer occupancies are discrete, and this allows for the possibility
that the dynamics will exhibit Zeno behavior. We will establish that
admissibility~\eqref{eqn:freqbound} is a critical condition to avoid
Zeno behavior when frames enter or leave the elastic buffer. As such,
we can ensure the existence and uniqueness of solutions for any
controller that satisfies the assumptions above.

We summarize the controller and sampling behavior of the model by writing it as
\begin{align*}
  \dot\theta_i(t) &= c_i^k + \wu_i && \text{for } t \in[s_i^k,s_i^{k+1}) \\
    \theta_i(s_i^k) &= \theta_i^0 + kp + d \\
  c_i^k &= G_i(\theta, s_i^k)
\end{align*}
Here the function $G_i$ contains the construction of the sample times
$t_i^k$, the measurement, and the controller. Notice that the first
argument of $G_i$ is the entire history of $\theta$, not just its
value at a particular time. To state this precisely, define the
(non-minimal) state space for the system as follows. Consider
functions $f:[\epoch, b] \to \R$ with $b>0$ or $f:[\epoch, \infty) \to
  \R$, which are piecewise linear, continuous, and satisfy
\[
f'(t) > \omega^\text{min} \text{ for almost all } t\in\dom f
\]
Let $\mathcal P$ denote the set of all such functions, and $\mathcal X
= \mathcal P^n$.  The function $G_i$ has domain $\mathcal D \subset
\mathcal X \times \R$, and $G_i:\mathcal D \to \R$.  Specifically, for
$\theta \in \mathcal X$, $i\in\mathcal V$, and $s\in\R$, we have
$\theta,s \in \mathcal D$ if $s\in\dom \theta$ and $\theta_i(s) =
\theta_i(0) + kp + d$ for some $k\in\Z_+$.  Under these conditions,
for $l=0,\dots,k$, let $t^l$ be such that
\[
\theta_i(t^l) = \theta_i(0) +lp
\]
which must exist, since $\theta_i$ is strictly increasing. Evaluate
\begin{align*}
  y^l &= \{ (j, \beta_\jtoi(t^l)) \mid j \in \neighbors(i)\} \\
  c &= \chi_i^k(y^0, \dots, y^k)
\end{align*}
and define $G_i(\theta,s) = c$. Notice that evaluating $\beta_\jtoi$
requires evaluating $\theta$ at times $-l_\jtoi$, which lie within the
domain of $\theta$ by the requirements on $\epoch$.

We can now prove the following result.
\begin{theorem}
  There exists a unique $\theta\in\mathcal{X}$ satisfying the abstract
  frame model~\eqref{eqn:afm} and initial conditions~\eqref{eqn:ics}.
\end{theorem}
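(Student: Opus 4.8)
The plan is to build the unique solution by the \emph{method of steps}, extending $\theta$ forward in real time across the successive instants at which a controller correction changes. First I would collect the consequences of admissibility and the parameter bounds: since $\omega_i(t)=c_i^k+\wu_i>\omega^{\text{min}}>0$ on each $[s_i^k,s_i^{k+1})$ and likewise $\omega_i^{(-1)},\omega_i^{(-2)}>\omega^{\text{min}}$ on the initial segments, every $\theta_i$ is continuous, strictly increasing, and piecewise linear, so each equation $\theta_i(t)=\theta_i^0+kp$ or $\theta_i(t)=\theta_i^0+kp+d$ has a unique root; hence the sample times $t_i^k$ and correction times $s_i^k$ are determined by $\theta_i$, the breakpoints of $\theta_i$ are exactly $\{s_i^k\}\cup\{0\}$, the initial data is consistent (so that $s_i^0=d/\omega_i^{(-1)}$ and $t_i^0=0$), and $\theta_i\in\mathcal P$. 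This reduces~\eqref{eqn:afm} to the summarized form $\dot\theta_i=c_i^k+\wu_i$ on $[s_i^k,s_i^{k+1})$, $\theta_i(s_i^k)=\theta_i^0+kp+d$, $c_i^k=G_i(\theta,s_i^k)$, the remaining equations being mere definitions of $\beta_\jtoi$, $y_i^k$, and $c_i^k$ once $\theta$ is known.

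The heart of the argument is one inductive step. Suppose $\theta$ has been shown to be uniquely determined on $[\epoch,\tau]$ and that for every node $i$ the correction active on the interval containing $\tau$ is known, so the next breakpoint $\sigma_i=s_i^{k_i+1}=s_i^{k_i}+p/(c_i^{k_i}+\wu_i)>\tau$ is computable. Put $\tau'=\min_i\sigma_i>\tau$. On $[\tau,\tau')$ the slope of each $\theta_i$ equals its known active correction plus $\wu_i$, so $\theta$ is determined on $[\epoch,\tau')$ and, by continuity, at $\tau'$. For each node $i^{\ast}$ attaining the minimum (so $\tau'=s_{i^{\ast}}^{k_{i^{\ast}}+1}$) the new correction is forced to be $c_{i^{\ast}}^{k_{i^{\ast}}+1}=G_{i^{\ast}}(\theta,\tau')$, and the key observation is causality: $G_{i^{\ast}}$ reads $\theta$ only at the sample times $t_{i^{\ast}}^{l}\le t_{i^{\ast}}^{k_{i^{\ast}}+1}\le\tau'$ and at the latency-shifted arguments $t_{i^{\ast}}^{l}-l_{ji^{\ast}}$, and the condition $\epoch\le-(l_\itoj+d/\omega^{\text{min}})$ places all of these inside $[\epoch,\tau']$, where $\theta$ has just been determined; admissibility then gives the new slope $c_{i^{\ast}}^{k_{i^{\ast}}+1}+\wu_{i^{\ast}}>\omega^{\text{min}}$, from which the next breakpoint of $i^{\ast}$ is recomputed, while nodes not attaining the minimum keep their active correction. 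Every quantity introduced in the step is forced, so both existence and uniqueness of the extension to $[\epoch,\tau']$ follow.

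Iterating from the base case --- $\theta$ on $[\epoch,0]$ and on $[0,d/\omega_i^{(-1)}]$ supplied by~\eqref{eqn:ics}, with $s_i^0=d/\omega_i^{(-1)}$, $t_i^0=0$, and the datum $\theta_j(-l_\jtoi)$ already available since $-l_\jtoi\in[\epoch,0]$ --- produces a unique $\theta$ on $[\epoch,\sup_m\tau^{(m)})$, where $\tau^{(0)}<\tau^{(1)}<\cdots$ enumerates the distinct correction instants. I would then verify that this $\theta$ satisfies every equation of~\eqref{eqn:afm} and~\eqref{eqn:ics}, which is immediate, and that $\theta\in\mathcal X$.

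The main obstacle is the final point: showing $\sup_m\tau^{(m)}=\infty$, i.e. that the correction instants $\{s_i^k\}$ have no finite accumulation point --- in other words, ruling out Zeno behavior. This is exactly where admissibility is indispensable. If the instants accumulated at some $T<\infty$, then, there being finitely many nodes, some node's clock would satisfy $\theta_i(s_i^k)=\theta_i^0+kp+d\to\infty$ as $s_i^k\to T$, which no function of $\mathcal P$ can do. The frequency lower bound $\omega^{\text{min}}>0$ supplied by admissibility forces each clock to advance strictly and at a definite rate, so within any single controller epoch $[s_i^k,s_i^{k+1})$ the clock crosses only finitely many integers, which is precisely the non-Zeno property for frames entering and leaving the elastic buffers; carrying this through to conclude that the epochs themselves cannot pile up --- and hence that the construction covers all of $[\epoch,\infty)$ --- is the step I expect to require the most care.
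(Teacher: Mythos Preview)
Your proposal is correct and follows essentially the same method-of-steps construction as the paper: use causality of $G_i$ together with admissibility to extend the piecewise-linear $\theta$ forward across successive correction instants, with existence and uniqueness falling out because every quantity at each step is forced. The paper organizes the induction slightly differently --- rather than advancing all nodes together to the next global correction time $\tau'=\min_i\sigma_i$, it defines a map $F:\mathcal X\to\mathcal X$ that extends \emph{one} component at a time, always the one with the smallest current domain (ties broken by index), and this per-node formulation is exactly what produces the simulation algorithm in the section that follows. For the non-Zeno step you single out as the delicate part, the paper dispatches it in one line by asserting that each application of $F$ enlarges the chosen component's domain by at least $p/\omega^{\text{min}}$, so the minimum domain grows without bound; your contradiction argument via $\theta_i(s_i^k)=\theta_i^0+kp+d\to\infty$ on a bounded time interval is aiming at the same conclusion from the other direction.
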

\begin{proof}
  First let $\theta\in\mathcal X$ be defined by initial
  conditions~\eqref{eqn:ics}. We can now define a map $F:\mathcal X
  \to \mathcal X$ as follows. Given $\theta \in \mathcal X$, define
  $i$ according to
  \begin{equation}
    \label{eqn:choice}
    i = \min \arg \min_i \max \dom \theta_i
  \end{equation}
  which simply finds the component of $\theta$ which has the smallest
  domain, breaking ties by choosing the smallest index. Let $s=
  \max\dom \theta_i$. We have $\theta,s$ lies in $\dom G_i$, and so
  let $c=G_i(\theta,s)$. We now extend the piecewise linear function
  $\theta_i$ by adding a point so that
  \[
  \theta_i(s + p/(c+\wu_i)) = \theta_i(s)+p
  \]
  and extending $\theta_i$ to this point via linear interpolation.
  Let this newly extended function be $\theta^+$, whose derivative is
  by construction bounded below by $\omega^\text{min}$.  We define~$F$
  by $\theta^+ = F(\theta)$.  The proof now proceeds by induction,
  constructing a sequence of functions by repeatedly applying~$F$ to
  $\theta$. At each step, the domain of one component of $\theta$ is
  extended by at least $p/\omega^\text{min}$.  This is always the
  component with the smallest domain, and so in the limit the domain
  of $\theta$ extends to $[\epoch,\infty)$ as desired. Uniqueness
  follows by observing that the order in which updates are
  performed, as determined by the breaking of ties
  in~\eqref{eqn:choice}, does not affect the resulting solution.
\end{proof}

\section{Simulation}

The above proof of existence also leads immediately to the following algorithm
for simulating the system.

\begin{tabbing}
  \hskip 20pt \=  $s \leftarrow 0$ \\[1mm]
  \> $\xi_i \leftarrow \xi_i^0 \text{ for all } i \in \mathcal V$ \\[1mm]
  \> $\theta_i \leftarrow \text{initial conditions of~\eqref{eqn:ics}} \text{ for all } i \in \mathcal V$ \\[1mm]
  \> while $s < t_\text{max}$ \\[1mm]
  \> \hskip 20pt \=
  $i \leftarrow \min \arg \min_i \max \dom \theta_i$\\[1mm]
  \> \> $s \leftarrow \max \dom \theta_i$ \\[1mm]
  \> \> $t \leftarrow \theta_i^{-1}(\theta_i(s) - d)$ \\[1mm]
  \> \> $y \leftarrow \{ (j, \beta_\jtoi(t) \mid j \in \neighbors(i) \}$ \\[1mm]
  \> \> $\xi_i \leftarrow f_i (\xi_i, y)$ \\[1mm]
  \> \> $c \leftarrow g_i (\xi_i, y)$ \\[1mm]
  \> \> $\append(\theta_i, (s+p/(c+\wu_i),  \theta_i(s) + p) )$
\end{tabbing}

Here $\theta_i$ is a piecewise linear increasing function, stored as a
list of knots, \ie, pairs of real numbers. It can be evaluated via
linear interpolation, as can the inverse function $\theta_i^{-1}$. The
$\append$ function simply adds a knot to the end of the list.  Also
$\max \dom \theta_i$ is given by the independent variable of the last
knot.

Code to simulate this system is available at \linebreak \url
{https://bittide.googlesource.com/callisto}.

\subsection{Limitations of the model}

The abstract frame model allows one to determine the location of all
of the individual frames on the network. For control systems, one
often uses models at a coarser level of precision.  The AFM does,
however, omit certain phenomena. In particular, the model assumes that
a frame is inserted into the elastic buffer as soon as it has
traversed the link.

The model also does not include limitations which may be imposed by
the particular physical oscillator used. The oscillator frequency is
set by writing the desired frequency offset into a register, the
number of bits of which determines the number of quantization levels
available for control. Depending on the specific system parameters,
this quantization may have a significant effect on the control system.

\section{Control objectives and  requirements} \label{sec:controlobjectives}

The objective of \bittide\ is the maintenance of logical
synchronization. No matter what frequencies the nodes run at, the
logical synchronization happens as a consequence of the lockstep
behavior of the system --- until, that is, the buffer overflows, or
underflows.  Thus, the primary objective of the control system is to
manage the buffer occupancy --- keep it within limits and make it as
small as possible.  Buffer occupancy translates directly into
communication latency, therefore smaller is better, as long it it does
not underflow. Keeping the buffer within limits also requires
frequencies to not deviate too much, or for too long from each other.
Minimizing the frequency deviation is not a goal.  However,
the physical oscillator may drift, and the controller should tolerate
this drift.

A secondary objective is to keep the frequency as large as possible;
if there were no frequency limits, a trivial control strategy for
managing buffer occupancy would simply be to reduce all nodes to close
to zero frequency. Since the processor cores are clocked exactly in
lockstep with frame transmission, higher frequencies are better
because they result in faster computation.  The controller should
behave well when encountering constraints imposed by frequency limits.

All of the above objectives must be achieved with a
\emph{decentralized controller.}  This has profound consequences for
the application layer, enabling strong security guarantees. For the
control algorithm, it means that the individual controllers cannot
communicate directly with each other to share measurement information
and coordinate their actions.

We impose specific design choices on the controllers for a
\bittide\ system, motivated by the desire for strong isolation.  They
cannot use Paxos-style consensus algorithms, or elect a leader. They
cannot mark or inspect frames.  They cannot observe the real time
$t$. They cannot broadcast information on the network.  Each node must
update its frequency using only observations of its elastic buffers.

There is however an additional freedom available to
the \bittide\ controller. After booting, once the buffer occupancies and frequencies
have reached equilibrium, each node may adjust its buffer, discarding
frames or adding frames, and thereby reset its buffer occupancy to a
desired value.  This can happen only at bootup, because at this stage
the frames do not yet contain application data.  Therefore the
objective that buffer occupancies should be regulated applies only
after the initial transients of the boot phase.

The \bittide\ control system presented here measures buffer occupancy
and uses this directly to choose the frequency correction. More
sophisticated schemes which estimate frequencies at neighbors are possible
and may offer performance benefits.

There are further practical considerations for the \bittide\ system
controller.  These include allowing nodes to leave or join the network
gracefully, and detection of node and link failures.  It must also
work over a wide range of network topologies, and it is preferable
that minimal configuration should be necessary to inform the
controller of the topology and link latencies.  The controller should
handle a broad range of frequencies.  There are many ways to formalize
these objectives, and the design of a controller that achieves all of
these objectives remains a subject for research.

\section{Example}

\begin{figure}[t]
  \centerline{\begin{overpic}[width=1\linewidth]{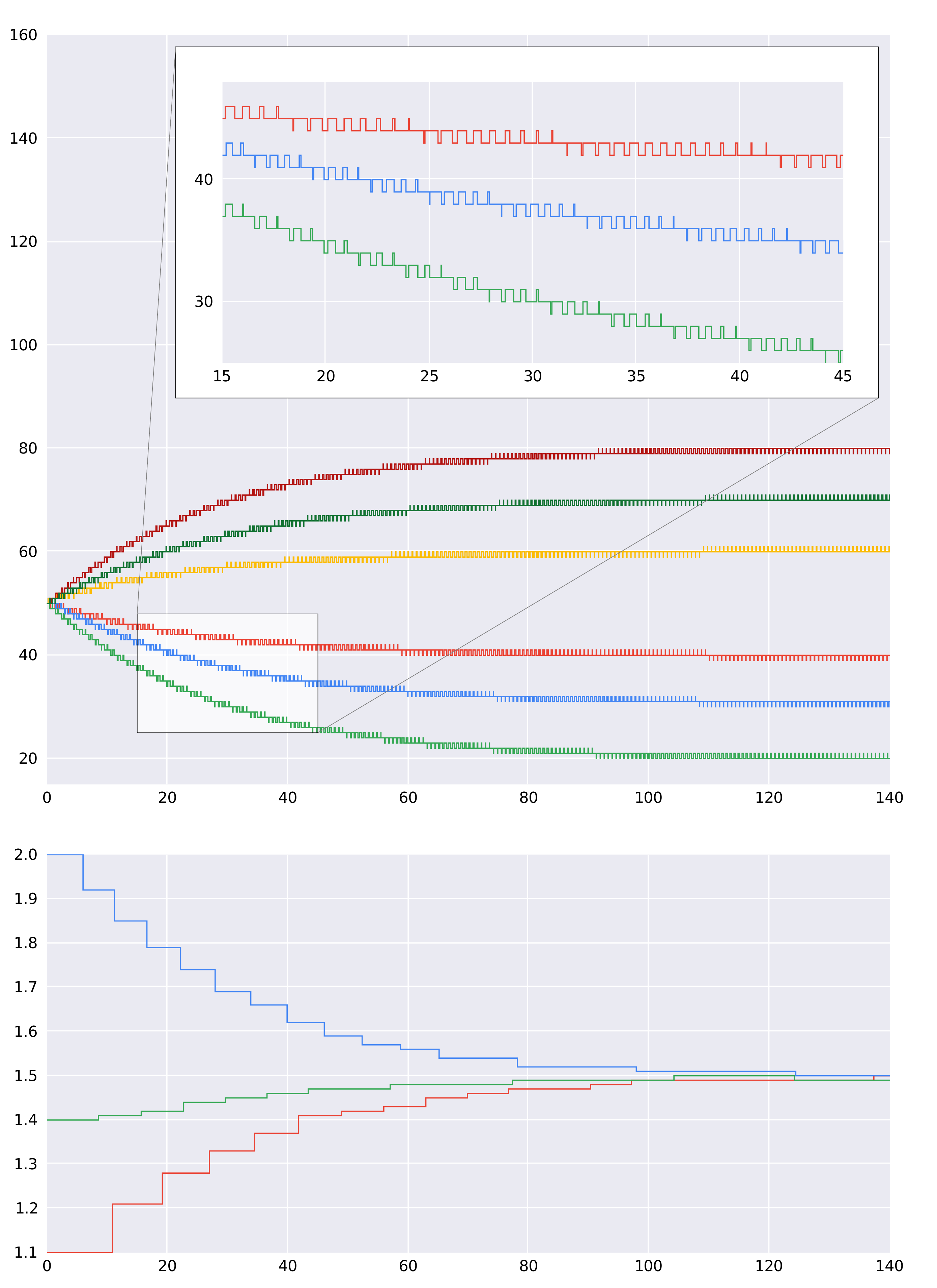}
      \put(0,17){\llap{\small$\omega$}}
      \put(0,68){\llap{\small$\beta$}}
      \put(38,-1){\clap{\small$t$}}
      \put(65,66.3){\tiny$\beta_{31}$}
      \put(65,62.5){\tiny$\beta_{32}$}
      \put(65,58.5){\tiny$\beta_{21}$}
      \put(65,50){\tiny$\beta_{12}$}
      \put(65,46.5){\tiny$\beta_{23}$}
      \put(65,42.5){\tiny$\beta_{13}$}
      \put(8,31.5){\tiny$\omega_3$}
      \put(8,14.5){\tiny$\omega_2$}
      \put(8,7.3){\tiny$\omega_1$}
  \end{overpic}}
  \vspace*{3mm}
  \caption{Occupancy and frequency for a system with three nodes.}
  \label{fig:threenodes}
\end{figure}

Figure~\ref{fig:threenodes} shows an example simulation for a graph
with three nodes and three edges in a triangular topology. The parameters
for this system are
\[
p = 10, \quad d = 2, \quad l_\jtoi=1, \quad \theta^0_{i} = 0.1,
\quad \beta_\jtoi^0 = 50
\]
for all $i,j$, and the uncorrected frequencies are
\[
\wu = (1.1, 1.4, 2.0)
\]
These parameters are chosen for illustrative purposes only (so that
the variations in occupancy and frequency are visible in the figure).
For example, on a modern system typically the uncorrected frequencies would
be much closer together. The controller used here is a simple
proportional controller, given by
\[
c_i = k_P \sum_{j \in \neighbors(i)} \beta_\jtoi
\]
and the gain $k_P=0.01$. A stability analysis of this system can be
found in~\cite{ecc}.

Some features of this mechanism are apparent. The frequency at a node
is proportional to the sum of the buffer lengths, and so short buffer
lengths at a node will cause that node to have a low rate of
transmission, and so its buffer lengths will increase. Consequently we
expect that frequencies and buffer lengths will tend to equilibrate,
as seen in these plots, and that at equilibrium all frequencies will
be close to each other. We can also observe that the buffer
occupancies at each end of a link almost sum to a constant. This
mirroring of the buffer occupancies is not perfect, due to the effects
of latency.

\section{Conclusions}

In this paper we have presented a model for the \bittide\ synchronization
mechanism. We have discussed the unique features and
requirements of the control design problem. As these systems are
deployed, we anticipate further research will be developed for these
systems.

\section{Acknowledgments}

We thank Sam Grayson, Sahil Hasan, Sarah Aguasvivas Manzano, Jean-Jacques
Slotine, and Tong Shen for many fruitful 
discussions. We particularly thank Sarah Aguasvivas Manzano for
carefully reading the manuscript.

\end{document}